%
%
%
%
%
%
%
\documentclass[%
 amsmath,amssymb,
 aps,
10pt ]{revtex4-2}

\usepackage{graphicx}
\usepackage{dcolumn}
\usepackage{bm}
\usepackage{ulem}
\usepackage{xcolor}
\usepackage{amsmath}
\usepackage{amsthm}
\usepackage{amssymb,amsfonts}

\usepackage{hyperref}

\def\sideremark#1{\ifvmode\leavevmode\fi\vadjust{\vbox to0pt{\vss
 \hbox to 0pt{\hskip\hsize\hskip1em
 \vbox{\hsize2cm\tiny\raggedright\pretolerance10000
  \noindent #1\hfill}\hss}\vbox to8pt{\vfil}\vss}}}

\newcommand{\ext}{{\rm{d}}}

\newtheorem{definition}{Definition}
\newtheorem{proposition}{Proposition}
\newtheorem{corollary}{Corollary}
\theoremstyle{remark}
\newtheorem*{remark}{Remark}

\newcommand{\mbold}[1]{\mbox{\boldmath{\ensuremath{#1}}}}

\def \bell {\mbox{{\mbold\ell}}}

\def \bk {\mbox{{\bf k}}}
\def \bm {\mbox{{\bf m}}}

\def \bcm {\mbox{{\bf M}}}

\def \bcl {\mbox{{\bf L}}}
\def \bck {\mbox{{\bf K}}}
\def \bgam {\mbox{{\mbold \Gamma}}}


\def\beq#1\eeq{\begin{align}#1\end{align}}

\begin{document}

\preprint{APS/123-QED}


\title{Unique Carrollian manifolds emerging from Einstein spacetimes}

\author{Samuel Blitz}
\affiliation{
 Department of Mathematics and Statistics, Masaryk University\\ Building 08, Kotlářská 2,
Brno, CZ 61137 \\
blitz@math.muni.cz
 }%

\author{David McNutt}\thanks{Corresponding author}
\affiliation{ Center for Theoretical Physics, Polish Academy of Sciences, Warsaw \\ 
mcnuttdd@gmail.com
}%

\author{Pawel Nurowski}
\affiliation{ Center for Theoretical Physics, Polish Academy of Sciences, Warsaw \\
nurowski@fuw.edu.pl
}%

\date{\today}

\begin{abstract}

We explicitly determine all shear-free null hypersurfaces embedded in an Einstein spacetime, including vacuum asymptotically flat spacetimes. We characterize these hypersurfaces as oriented 3-dimensional manifolds where each is equipped with a coframe basis, a structure group and a connection. Such manifolds are known as null hypersurface structures (NHSs). The coframe and connection one-forms for an NHS appear as solutions to the projection of the Cartan structure equations onto the null hypersurface. We then show that each NHS corresponds to a Carrollian structure equipped with a unique pair of Ehresmann connection and affine connection.

\end{abstract}

\maketitle


\section{\label{sec:intro}Introduction\protect}

\bigskip

Within general relativity (GR) and more generally in Lorentzian geometry, null hypersurfaces are defined as hypersurfaces whose normal vector-field are non-zero but with zero magnitude with respect to the metric. Unlike timelike or spacelike hypersurfaces, where the normal vector-field has positive or negative magnitude, the induced metric for null hypersurfaces is necessarily degenerate \cite{penrose1972}. Furthermore, the null normal vector-field lies in the tangent space of the hypersurface.     

Despite this degeneracy, null hypersurfaces play an important role in GR and other gravity theories.  For black hole solutions, the event horizon of the black hole is a null hypersurface \cite{Ashtekar2002, wald2010general}. In the study of asymptotically flat Einstein spacetimes, Newman and Unti showed that on a foliation of null hypersurfaces, the Einstein field equations reduce to a set of first-order linear inhomogeneous differential equations \cite{newman1962behavior}. Furthermore, asymptotic null infinity can, unsurprisingly, be treated as  a null hypersurface using conformal methods \cite{Herfray2022}.In the construction of new spacetimes coming from gluing solutions together using junction conditions, null hypersurfaces are the natural choice for black hole and wormhole solutions \cite{mars1993}. In the case of degenerate Killing horizons, and in particular extremal Killing horizons, these hypersurfaces allow for the construction of near-horizon geometries (NHGs). These spacetimes represent the open region in the proximity of the black hole's region and give important information about the physics and topology of these black holes \cite{kunduri2013classification}. Strictly speaking, NHGs belong to the degenerate Kundt class of spacetimes \cite{coley2009kundt} while the majority of black hole solutions do not \cite{kramer}. Therefore, the behavior and properties of the null hypersurface are not necessarily the same as those of the original horizon.

Due to the degeneracy of the induced metric on null hypersurfaces, they cannot be realized as Riemannian geometries~\cite{Henneaux1979,figueroa2020,pinzanifokeeva2022,blitz2023horizons}. Instead, the geometric structure is called a Carrollian geometry, whose structure group is the Carroll group \cite{LvyLeblond1965}. This group is a dual contraction of the Poincar{\'e} group where the velocity of light is set to zero. As these are physically relevant hypersurfaces in spacetime, there is a growing interest in applications of Carrollian geometries in black hole physics~\cite{penna2016,Donnay_2019,redondo2023,gray2023carrollian,ciambelli_fluid-gravity_2023,freidel2024,ciambelli2024} and field theories~\cite{bagchi_flat_2016,Ciambelli2018,Bagchi2019,Bagchi2020,gupta_constructing_2021,Banerjee2021,rivera2022revisiting,donnay_carrollian_2022,Bagchi2022,chen_higher-dimensional_2023,bagchi_carroll_2023,baiguera2023conformal,chen_constructing_2023,mehra_toward_2023,salzer_embedding_2023,banerjee_one-loop_2023}. In particular, it has been shown that fluids lying on black hole horizons have non-trivial dynamics, known as Carrollian hydrodynamics \cite{bergshoeff2014dynamics, freidel2023carrollian, freidel2024}. Furthermore, there is a growing interest in the behaviour of particles that lie on the horizon of a black hole. The dynamics of these particles are dictated by the Carroll group. In the case of particles on the Kerr-Newman black hole's horizon, two different approaches were used to study the behaviour of particles on the horizon, notably the anyonic spin-hall effect \cite{marsot2022anyonic,gray2023carrollian}. While the horizon of the Kerr-Newman black hole is outside of the scope of the current paper, this example motivates the necessity for a unique approach to determining a null hypersurfaces geometry as a Carrollian geometry.

Mathematically, a Carrollian geometry is a $(d-1)$-dimensional differentiable manifold $\mathcal{N}$ equipped with a rank-$(d-2)$ semi-definite symmetric bilinear form ${\bf h}$ and a \textit{fundamental vector field} ${\bk}$ that spans the kernel of ${\bf h}$. In this paper, we consider those Carrollian geometries induced by null hypersurfaces in four dimensional spacetimes, where $d=4$.

To discuss such structures concretely, more data is required. With the addition of an Ehresmann connection $\bck$ that is dual to $\bk$, the tangent space $T \mathcal{N}$ is decomposed into a horizontal subbundle $H \mathcal{N}$ and the span of $\bk$. Furthermore, when $\bck$ is integrable, the horizontal subspace at a point can be integrated into a spacelike submanifold $\mathcal{C}$ whose tangent space $T_p \mathcal{C}$ at a point $p \in \mathcal{C}$ is the horizontal subspace through that point $H_p \mathcal{N}$. Indeed, this picture explains why a Carrollian geometry paired with an Ehresmann connection is called \textit{ruled}~\cite{ciambelli2024}, as it defines a foliation of the Carrollian geometry. 

Within Carrollian geometry, there is substantial debate on the choice of affine connection used to construct differential geometric invariants. Following from arguments using Spencer cohomology, there is no unique torsion-free connection for Carrollian manifolds constructed only from the intrinsic geometry of the manifold\cite{figueroa2020}, unlike for (pseudo-)Riemannian manifolds. As such, different geometric investigations may warrant different choices of additional geometric constraints that determine an affine connection.

In the literature, two choices seem to appear: an affine connection that is torsion-free but non-metric~\cite{Chandrasekaran2022,mars2023,mars2023-2,ciambelli2024,freidel2024,freidel2024-2} and an affine connection that is metric but has torsion~\cite{Hartong2015,bergshoeff2017,bekaertmorand2018,Ciambelli2018,Ciambelli2018-2,Ciambelli2019,figueroa2020,gallegos2021,figueroa2022,miskovic2023,marotta2023,baiguera2023conformal,blitz2023horizons}. While the latter is perhaps more well-motivated mathematically, the former seems to be more useful in physical contexts.

It is not clear, however, that these two options completely exhaust the natural geometric construction of affine connections on null hypersurfaces. To that end, we construct an affine connection for the Carrollian geometry in a new way. In~\cite{nurowski2000}, Nurowski and Robinson used Cartan's moving frame approach to study the invariants of general null hypersurfaces, called null hypersurface structures (NHS), which included an examination of the bulk connection forms pulled back to the hypersurface. Inspired by their work, in this paper we examine the induced connection (found by gauge-fixing the freedom in the frame) on null hypersurfaces embedded in bulk spacetimes that are shear-free solutions to the vacuum Einstein equations. While somewhat restrictive, this family of hypersurfaces accounts for many solutions of physical interest (as, for example, many black hole spacetimes solve the vacuum Einstein equations). In doing so, we are able to produce closed form expressions for the connection coefficients and curvature components for all null hypersurfaces that can possibly be embedded in such spacetimes. The resulting induced connection can then be compared to the two connections described above. Furthermore, the intrinsic and extrinsic first-order invariants of a Carrollian geometry can be fully characterized for all null hypersurfaces satisfying the constraints we have imposed.

\section{Cartan Structure equations and their pullback to a null hypersurface} \label{sec2}

We will consider the Newman-Penrose (NP) formalism as presented in \cite{nurowski2000} for a four-dimensional spacetime $(M, {\bf g})$ with a metric ${\bf g}$ and signature $(+++-)$. We define $({\bf e}_1, {\bf e}_2,{\bf e}_3,{\bf e}_4) = (\bm, \bar{\bm}, \bell, \bk)$ with the dual coframe $(\theta^1, \theta^2, \theta^3, \theta^4) = (\bcm, \bar{\bcm}, \bcl, \bck)$ so that 
\begin{eqnarray*}
	&\bcm = {\bf g}(\bar{\bm},\cdot),\quad \bar{\bcm} = {\bf g}({\bm},\cdot), \quad \bcl = {\bf g}(\bk,\cdot), \quad \bck = {\bf g}(\bell,\cdot).
\end{eqnarray*}
\noindent Relative to this basis of one-forms the metric takes the form
\beq
    {\bf g} = -2\bck \bcl + 2\bcm \bar{\bcm}, \label{eqn:gcoframe}
\eeq

\noindent With this choice the Lorentz frame transformations that preserve the null direction $\bk$ can be written as 

\begin{equation}
\begin{aligned}
    \bcm'  &= e^{i\phi}[\bcm + z \bcl ], \\
    \bcl' &= A^{-1}\bcl, \\
    \bck' &= A[\bck + z \bar{z} \bcl + z \bar{\bcm} + \bar{z}\bcm]. 
\end{aligned} \label{grp:kfixed}
\end{equation}
\noindent where $\phi$ and $A$ are real-valued parameters while $z$ and its conjugate are complex-valued parameters. 

The first Cartan structure equations are an exterior differential system that relates the exterior derivatives of the coframe basis to the components of the connection one-forms $\bgam_{ab}$:
\begin{equation}
\begin{aligned}
	& \ext \bcm = - \bgam_{21} \wedge \bcm - \bgam_{23} \wedge \bcl - \bgam_{24} \wedge \bck, \\
	& \ext\bcl = \bgam_{41} \wedge \bcm + \bgam_{42} \wedge \bar{\bcm} + \bgam_{43} \wedge \bcl,\\
	&\ext \bck = \bgam_{31} \wedge \bcm + \bgam_{32} \wedge \bar{\bcm} + \bgam_{34} \wedge \bck. 
\end{aligned} \label{CartanStr1}
\end{equation}

\noindent where 
\begin{equation}
\begin{aligned}
	& {\bf \bgam}_{41} = \sigma \bcm + \rho \bar{\bcm} + \tau \bcl + \kappa \bck, \\
	& \bgam_{23} = \mu \bcm + \lambda \bar{\bcm} + \nu \bcl + \pi \bck, \\
	& \frac12 (\bgam_{12} + \bgam_{34}) = -\beta \bcm - \alpha \bar{\bcm} - \gamma \bcl - \epsilon \bck. 
\end{aligned} \label{Bulk1forms}
\end{equation}
\noindent The coefficients of the one-forms are complex-valued.  It is important to note here that traditionally the connection one-forms are written as $\bgam^a_{~b} = \Gamma^a_{~bc} \theta^c$. However, the index $a$ may be lowered to exploit the metric compatibility condition $\bgam_{ab} = - \bgam_{ba}$. For example, $\bgam^1_{~3}$ is related to $\bgam^4{}_2$ since both can be expressed as $\bgam_{23}$.

Under the transformations in equation \eqref{grp:kfixed}, these tranform as:
\begin{equation}
    \begin{aligned}
        \bgam_{41}' & = A^{-1}e^{-i\phi}\bgam_{41},\\
        \bgam_{23}' &=A e^{i\phi}(\bgam_{23} + z [\bgam_{12}+\bgam_{34}] - z^2\bgam_{14}-\ext z),\\
        [\bgam_{12}+\bgam_{34}]' &= \bgam_{12}+\bgam_{34}+2z \bgam_{41}+\ext (\ln A) + i \ext \phi.
    \end{aligned} \label{eqn:GammaTransfs}
\end{equation}

Denoting the Weyl spinor scalars as
\begin{equation}
\begin{aligned}
    & \Psi_0 = C_{abcd}\bcl^a\bar{\bcm}^b \bcl^c \bar{\bcm}^d, \quad \Psi_3 = C_{abcd}\bcl^a\bck^b {\bcm}^c \bck^d, \\
    & \Psi_1 = C_{abcd}\bcl^a\bck^b \bcl^c \bar{\bcm}^d, \quad \Psi_4 = C_{abcd} {\bcm}^a\bck^b {\bcm}^c \bck^d, \\
    & \Psi_2 = C_{abcd}\bcl^a\bar{\bcm}^b {\bcm}^c \bck^d = \frac12 C_{abcd}\bcl^a\bck^b(\bcl^c\bck^d -\bar{\bcm}^c{\bcm}^d),
\end{aligned}    
\end{equation}
\noindent the Ricci scalar as $R=R^i_{~i}$, and the traceless Ricci tensor as $S_{ij} = R_{ij} - g_{ij}R/4~~i,j = 1,\ldots, 4$, the second Cartan structure equations are~\cite{kramer}:
\begin{widetext}
\begin{equation}
\begin{aligned}
	\ext\bgam_{23} &= (\bgam_{12} + \bgam_{34}) \wedge \bgam_{23} + \Psi_4 \bar{\bcm} \wedge \bcl + \Psi_3 (\bck \wedge \bcl - \bcm \wedge \bar{\bcm}) \\
	& \quad  + (\Psi_2 + \frac{1}{12} R) \bck \wedge \bcm + \frac{1}{2} S_{33} \bcm \wedge \bcl + \frac{1}{2} S_{32} (\bck \wedge \bcl + \bcm \wedge \bar{\bcm})  + \frac{1}{2}S_{22} \bck \wedge \bar{\bcm}, \\
	\ext \bgam_{14} &= \bgam_{14} \wedge (\bgam_{12} + \bgam_{34}) + (-\Psi_2 - \frac{1}{12}R) \bar{\bcm} \wedge \bcl - \Psi_1 (\bck \wedge \bcl - \bcm \wedge \bar{\bcm}) \\
	& \quad   - \Psi_0 \bck \wedge \bcm - \frac{1}{2} S_{11} \bcm \wedge \bcl - \frac{1}{2} S_{41}( \bck \wedge \bcl + \bcm \wedge \bar{\bcm}) - \frac{1}{2}S_{44} \bck	\wedge \bar{\bcm}, \\
		 \frac{1}{2}(\ext \bgam_{12} +\ext\bgam_{34}) &= \bgam_{23} \wedge \bgam_{14} - \Psi_3 \bar{\bcm} \wedge \bcl - (\Psi_2 - \frac{1}{24}R) (\bck \wedge \bcl - \bcm \wedge \bar{\bcm}) \\ 
		& \quad  - \Psi_1 \bck \wedge \bcm - \frac{1}{2} S_{31} \bcm \wedge \bcl - \frac{1}{4}(S_{12} + S_{34})(\bck \wedge \bcl + \bcm \wedge \bar{\bcm}) -\frac{1}{2}S_{42} \bck \wedge \bar{\bcm}. 
\end{aligned} \label{CartanStr2}
\end{equation}
\end{widetext}

Without loss of generality, a degenerate 3-manifold $C$ is realized as a null hypersurface embedded in the spacetime M as $\mathcal N=\varphi(C)$, via the embedding
$\varphi: C\to M$ such that ${\varphi}^*{\bf \bcl} = 0$. Similarly, using the natural projection operator to $T \mathcal{N}$  induced by the choice of frame, $\top : TM|_{\mathcal{N}} \rightarrow T \mathcal{N}$, we have that $\top \bell = 0$. 
Thus using index notation, for any tensorial quantity, components with an index equal to 3 must be zero when viewed as objects on the hypersurface.
This observation is key to deriving the intrinsic classification of null hypersurfaces as Carrollian manifolds. Then the Carrolian structure ${\bf h},{\bf k}$ on $C$ is given by ${\bf h}=\varphi^* {\bf g}=2{\bf M} \bar{\bcm}$ and ${\bf k}$ being the degenerate direction for ${\bf h}$, and the structure equations on $C$ are the pullbacks via $\phi$ of Cartan's first and second structure equations for the spacetime.

For example, by combining equations \eqref{CartanStr1} with \eqref{Bulk1forms}, we find that 
\beq 
   \ext \bcl = (\bar{\rho}-\rho) \bcm \wedge \bar{\bcm}+ (\bar{\alpha}+\beta-\tau) \bcm \wedge \bcl + (\alpha + \bar{\beta}-\bar{\tau}) \bar{\bcm} \wedge \bcl + \kappa \bck \wedge \bcm + \bar{\kappa} \bck \wedge \bar{\bcm} + (\epsilon + \bar{\epsilon} )\bck \wedge \bcl. 
\eeq

\noindent Pulling back this equation to the null hypersurface, the left-hand side must vanish, whereas on the right-hand side only select terms vanish. Thus we find that
\beq 
    Im(\rho)|_{\mathcal{N}} = 0,\quad \kappa|_{\mathcal{N}} =0, 
\eeq
\noindent implying that ${\bf k}$ is hypersurface-orthogonal and geodesic. As an abuse of notation in what follows, when obvious from context, we will omit $|_{\mathcal{N}}$ and work with the same symbols when working with pulled back quantities.

Instead of describing an NHS as a three-dimensional oriented manifold $\mathcal{N}$ equipped with a degenerate metric, ${\bf h}$ of signature $(++0)$, we may instead describe it as an oriented manifold $\mathcal{N}$ equipped with a basis of coframe fields. Recalling that in the spacetime, we may adapt our null coframe to the null hypersurface and the resulting metric takes the form
\beq
    {\bf g} = -2\bck \bcl + 2\bcm \bar{\bcm},
\eeq

\noindent then the degenerate metric on $\mathcal{N}$ is
\beq 
    {\bf h} = 2 \bcm \bar{\bcm}
\eeq
\noindent where $\bcm$ is a complex-valued one-form. On $\mathcal{N}$ we may always include a real-valued $\bck$ so that $\bcm \wedge \bar{\bcm} \wedge \bck \neq 0$. Pulling back the Lorentz frame transformations in equation \eqref{grp:kfixed} yields the following  transformations
\beq
\begin{aligned}
    \bcm' &= e^{i\phi} \bcm, \\     
    \bck' &= A[\bck+z\bar{\bcm} + \bar{z} \bcm].
\end{aligned} \label{grp:CfrmlCarroll}
\eeq
\noindent This group is just the 1-dimensional extension of the Carroll group with parameter $A$. The original definition of a NHS can be restated as \cite{nurowski2000}:

\begin{definition}
    A null hypersurface structure (NHS) is a 3-dimensional manifold $\mathcal{N}$ equipped with an equivalence class of one-forms $(\bcm, \bck) \equiv (\bcm, \bar{\bcm}, \bck)$ such that
    \begin{itemize}
        \item $\bcm$ is complex-valued and $\bck$ is real-valued.
        \item $\bcm \wedge \bar{\bcm} \wedge \bck \neq 0$ at every point of $\mathcal{N}$.
        \item Given $(\bcm , \bck)$ and $(\bcm', \bck')$, they are equivalent if and only if they are related by the transformations \eqref{grp:CfrmlCarroll}.
    \end{itemize}
\end{definition}

This definition is somewhat more general than a pre-Carrollian structure \cite{blitz2023horizons} due to the transformation group admitting a real-valued parameter, $A$. However, by setting this parameter to one  and choosing a fundamental vector field ${\bf k}$, we recover a Carrollian structure. In a similar manner as Carrollian manifolds, we may also choose a connection for the NHS:

\begin{definition}
    A null hypersurface manifold is a NHS equipped with an affine connection.
\end{definition}

As mentioned in the introduction, we are interested in the characterization of embedded null hypersurfaces in spacetime. Following the classification in \cite{nurowski2000}, we could consider the invariants arising from the exterior differential systems that arise by pulling back to $\mathcal{N}$. Pulling back to the null hypersurface, 
the first Cartan structure equations become:
\begin{equation}
\begin{aligned}
	&\ext \bcm = - \bgam_{21} \wedge \bcm -  \bgam_{24} \wedge \bck, \\
	& 0 = \bgam_{41} \wedge \bcm + \bgam_{42} \wedge \bar{\bcm},\\
	&\ext \bck = \bgam_{31} \wedge \bcm + \bgam_{32} \wedge \bar{\bcm} + \bgam_{34} \wedge \bck.  
\end{aligned} \label{NullCartan1}
\end{equation}
\vspace{ 2 mm}

A similar pullback of the second Cartan structure equations is
\begin{widetext}
\begin{equation}
\begin{aligned}
	\ext\bgam_{23} &= (\bgam_{12} + \bgam_{34}) \wedge \bgam_{23}  - \Psi_3  \bcm \wedge \bar{\bcm}  + (\Psi_2 + \frac{1}{12} R) \bck \wedge \bcm  + \frac{1}{2} S_{32} \bcm \wedge \bar{\bcm}  + \frac{1}{2}S_{22} \bck \wedge \bar{\bcm}, \\
	\ext \bgam_{14} &= \bgam_{14} \wedge (\bgam_{12} + \bgam_{34}) + \Psi_1  \bcm \wedge \bar{\bcm} - \Psi_0 \bck \wedge \bcm - \frac{1}{2} S_{41} \bcm \wedge \bar{\bcm} - \frac{1}{2}S_{44} \bck	\wedge \bar{\bcm}, \\
		 \frac{1}{2}(d\bgam_{12} +\ext\bgam_{34}) &= \bgam_{23} \wedge \bgam_{14}  + (\Psi_2 - \frac{1}{24}R)  \bcm \wedge \bar{\bcm}  - \Psi_1 \bck \wedge \bcm  - \frac{1}{4}(S_{12} + S_{34}) \bcm \wedge \bar{\bcm} -\frac{1}{2}S_{42} \bck \wedge \bar{\bcm}. 
\end{aligned} \label{NullCartan2}
\end{equation}
\end{widetext}

This approach is independent of the choice of connection on $\mathcal{N}$ and explicitly classifies null hypersurfaces in terms of invariantly-defined quantities. In particular, the quantities $\rho$ and $\sigma$ can be used to broadly classify null hypersurfaces into four distinct cases~\cite{nurowski2000}. We will see that this approach picks out a preferred connection for a given hypersurface embedded in the spacetime. 

We note that the equations in \eqref{CartanStr2} could be expanded to give the NP field equations. Similarly, by exterior differentiating \eqref{CartanStr2} we may recover the Bianchi identities. An advantage of deriving the equations in this manner is that one may explicitly derive identities that are helpful when pulled back to the null hypersurface.

\section{Field equations for null hypersurfaces in Einstein solutions}

In the mathematics literature, a spacetime is said to be Einstein when it satisfies the vacuum Einstein field equations. Explicitly, this means that
$$R_{ij} - \tfrac{1}{2} g_{ij} R + \Lambda g_{ij}= 0\,,$$
for some constant $\Lambda$. Taking the trace of this expression we find that the Ricci scalar is a constant, $R = 4 \Lambda$. This family of spacetimes includes many of physical interest, including but not limited to the prototypical black hole solutions. Furthermore, by setting $\Lambda=0$, we recover all asymptotically flat vacuum solutions, including the Schwarzschild black hole and the Kerr black hole.

Recall from Section~\ref{sec2} that the pullback of the first Cartan structure equations~\eqref{NullCartan1} reduces to two equations and a constraint equation. Pulling back the definition of the connection coefficients in equation~\eqref{Bulk1forms}, we have that


\begin{equation}
\begin{aligned}
	& {\bf \bgam}_{41} = \sigma \bcm + \rho \bar{\bcm}, \\
	& \bgam_{23} = \mu \bcm + \lambda \bar{\bcm} + \pi \bck, \\
	& \frac12 (\bgam_{12} + \bgam_{34}) = -\beta \bcm - \alpha \bar{\bcm} - \epsilon \bck. 
\end{aligned} \label{Null1forms}
\end{equation}

Similarly, imposing the Einstein condition on the pullback of the second Cartan structure equations~\eqref{NullCartan2} yields
\begin{widetext}
\begin{equation}
\begin{aligned}
	\ext\bgam_{23} &= (\bgam_{12} + \bgam_{34}) \wedge \bgam_{23}  - \Psi_3  \bcm \wedge \bar{\bcm}  + (\Psi_2 + \frac{1}{12} R) \bck \wedge \bcm , \\
	\ext \bgam_{14} &= \bgam_{14} \wedge (\bgam_{12} + \bgam_{34}) + \Psi_1  \bcm \wedge \bar{\bcm} - \Psi_0 \bck \wedge \bcm  \\
		 \frac{1}{2}(d\bgam_{12} +\ext\bgam_{34}) &= \bgam_{23} \wedge \bgam_{14}  + (\Psi_2 - \frac{1}{24}R)  \bcm \wedge \bar{\bcm}  - \Psi_1 \bck \wedge \bcm. 
\end{aligned} \label{NullCartan2Einstein}
\end{equation}
\end{widetext}

We remind the reader that we have chosen to write the pulled back connection 1-forms with indices lowered. In particular, we have the following relationships 
\beq
\begin{aligned}
    \bgam_{23} &= - \bgam^4_{~2}, \\
    \bgam_{41} &= - \bgam^2_{~4}, \\
    \bgam_{12} &= \bgam^2_{~2},\\
    \bgam_{34} &= \bgam^4_{~4}.
\end{aligned}
\eeq
\noindent along with their complex conjugates. The metric compatibility condition in the spacetimes implies that $\Gamma_{11a} = \Gamma_{22a} = 0,\quad a=1,2,4$, which ensures that the above connection one-forms on the null hypersurface contain all the information necessary to specify a connection.

Combining equations \eqref{NullCartan1}, \eqref{Null1forms}, and the second equation in 
\eqref{NullCartan2Einstein} we may write $\Psi_0$ as 
\beq 
    \Psi_0 = \bk(\sigma) + (\bar{\epsilon} - 3 \epsilon - \rho) \sigma \label{Psi0} 
\eeq
\noindent From the coarse classification of NHSs in \cite{nurowski2000} using $\rho$ and $\sigma$ there are four cases: 
\begin{enumerate}
    \item $\sigma=\rho =0, $
    \item $\sigma =0, \rho \neq 0,$
    \item $\sigma \neq 0, \rho =0 $, 
    \item $\sigma \neq 0, \rho \neq 0$.    
\end{enumerate}
For Einstein spacetimes, the third case is necessarily excluded due to the null Raychaudhuri equation \cite[Chapter 5]{kramer}:
\beq 
	-{\bf k}(\rho) + \rho^2 + |\sigma|^2 =  -\frac12 S_{44}. \label{Raychaudhuri}
\eeq 
\noindent In the case of an Einstein spacetime with $\rho = 0$, this equation implies that $\sigma = 0$. In this paper we will focus on the null hypersurfaces where $\sigma = 0$ as these hypersurfaces have physical relevance as shear-free hypersurfaces. Thus, imposing that $\sigma = 0$ in equation~(\ref{Psi0}), it follows that $\Psi_0 = 0$. This implies that the Weyl tensor is at least of Petrov type I when pulled back to the null hypersurface. This further implies that, on the hypersurface, $\bck$ is aligned with the Weyl tensor. Following the analysis by Szekeres \cite{szekeres1965gravitational}, this implies that there are no incoming transverse gravitational waves affecting the hypersurface.

We note that $\Psi_3 $ and $\Psi_4$ will not explicitly appear in the curvature tensor of the corresponding Carrollian geometry, as they are coefficients of terms that contain $\bell$. Despite this, $\Psi_3$ appears as a coefficient of $\bcm \wedge \bar{\bcm}$ in the second Cartan structure equations and hence will influence the behaviour of the resulting Carrollian geometry. This confirms our intuition that outgoing longitudinal waves can affect a null hypersurface's intrinsic geometry. The absence of $\Psi_4$ in the equations implies that Carrollian hypersurface is not affected by potential outgoing transverse gravitational waves. We may treat the pullbacks of $\Psi_i$, $i=1,2,3$ onto the hypersurfaces as extrinsic data that give information about the interaction of the hypersurface with the bulk spacetime. Continuing with the analysis in \cite{szekeres1965gravitational}, $\Psi_2$ describes the Coulomb-like part of the gravitational field while $\Psi_1$ and $\Psi_3$ describe incoming or outgoing longitudinal gravitation waves in the spacetime. The pullback of these quantities give insight into dynamics of the hypersurface. 

We may explicitly integrate the field equations for any shear-free NHS in an Einstein spacetime to determine the coframe and connection one-forms. In what follows, we we will employ index notation for the connection coefficients instead of the quasi-NP formalism used in \cite{nurowski2000}. The following index convention will be used for covariant differentiation of basis one-forms:
\beq 
 \nabla_a \theta^b = \Gamma^b_{~ca} \theta^c,\quad a,b,c = 1,2,4.
\eeq
\noindent This ordering is consistent with the ordering used in the Cartan structure equations for the original spacetime. We present the solutions to the two cases for shear-free NHSs as two propositions and include the proofs in the following subsections.

\begin{proposition} \label{Prop:NEHCarroll}
    Any solution to the Einstein structure equations \eqref{NullCartan2Einstein} on a null hypersurface $\mathcal{N}$ satisfying $\rho = \sigma = 0$ is gauge equivalent to

    \beq 
    \begin{aligned}
        \bcm = e^{w}\ext \zeta, \quad \bck =\ext r + (X + r \bar{w}_{,\zeta})\ext \zeta + (\bar{X} + r w_{,\bar{\zeta}})\ext \bar{\zeta},
    \end{aligned} \label{Cfrm:Prop1}
    \eeq
\noindent with the following connection coefficients and their complex conjugates:   
\begin{widetext}
    \begin{equation}
    \begin{aligned}
       & \Gamma^2_{~41}=0,~\Gamma^2_{~42}=0, \Gamma^2_{~44} = 0,\\
       & \Gamma^2_{~21} = - \bar{w}_{,\zeta} e^{-w},\quad \Gamma^2_{~22} = w_{,\bar{\zeta}} e^{-\bar{w}}, \quad \Gamma^2_{~24}=0, \\
        &\Gamma^4_{~41}  = - \bar{w}_{,\zeta} e^{-w}, \quad \Gamma^4_{~42} = -w_{,\bar{\zeta}} e^{-\bar{w}}, \quad \Gamma^4_{~44} = 0, \\
        & \Gamma^4_{~21} = -\left[ \left(\frac{R}{8} e^w + \bar{w}_{,\zeta ,\bar{\zeta}} e^{-\bar{w}} \right) r + \left( S + \frac12 \left( X_{,\bar{\zeta}} - \bar{X}_{,\zeta}+w_{,\bar{\zeta}} X - \bar{w}_{,\zeta} \bar{X} \right) \right) e^{-\bar{w}} \right] e^{-w}, \\
        &\Gamma^4_{~22} = 0,\quad \Gamma^4_{~24}=0.
    \end{aligned} \label{Con:Prop1}
    \end{equation}
\end{widetext}

\noindent Here, $w = w(\zeta, \bar{\zeta}), X=X(\zeta, \bar{\zeta})$ are arbitrary complex-valued functions, and $S = S(\zeta, \bar{\zeta})$ is an arbitrary real-valued function. 
\vspace{3 mm}

\noindent The pulled back NP Weyl scalars are
\begin{widetext}
\beq
\begin{aligned}
    & \Psi_0 = 0\\ 
    & \Psi_1 = 0 \\    
    & \Psi_2 = \frac{R}{24} + \bar{w}_{,\zeta \bar{\zeta}} e^{-(w + \bar{w})} \\
    & \Psi_3 = e^{-(w + \bar{w})} \left[ \left( \left( \frac{R}{8} e^w + \bar{w}_{,\zeta \bar{\zeta}} \right) r +\frac12 e^{-\bar{w}} (2S + X_{,\bar{\zeta}} - \bar{X}_{,\zeta} + w_{,\bar{\zeta}} X - \bar{w}_{,\zeta} \bar{X}) \right)_{,\bar{\zeta}} - \left( \frac{R}{8} e^w + \bar{w}_{,\zeta \bar{\zeta}} e^{-\bar{w}} \right) (\bar{X}+rw_{,\bar{\zeta}})\right].
\end{aligned} \nonumber
\eeq
\end{widetext}
\end{proposition}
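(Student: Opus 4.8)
The plan is to integrate the exterior differential system obtained by pulling back the first and second Cartan structure equations to $\mathcal{N}$, specialized to $\rho=\sigma=0$ and the Einstein condition, exhausting the residual gauge freedom in \eqref{grp:CfrmlCarroll} step by step. First I would exploit the first structure equation \eqref{NullCartan1}. With $\rho=\sigma=0$ (and $\kappa=0$, $\mathrm{Im}\,\rho=0$ already forced), the constraint $0=\bgam_{41}\wedge\bcm+\bgam_{42}\wedge\bar{\bcm}$ is automatically satisfied, and from $\bgam_{41}=0$ together with $\ext\bcl=0$ one reads off that $\ext\bcm = -\bgam_{21}\wedge\bcm - \bgam_{24}\wedge\bck$ with $\bgam_{24}=\bgam_{23}$-type data restricted. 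The key structural input is that $\bgam_{41}\wedge\bcm = 0$, hence $\ext\bcm \wedge \bcm$ involves only $\bcm\wedge\bck$; this makes $\bcm$ proportional to a closed form locally, so I can write $\bcm = e^{w}\ext\zeta$ for a complex coordinate $\zeta$ and complex function $w$, absorbing the phase freedom $e^{i\phi}$ into $\mathrm{Im}\,w$ and the residual holomorphic reparametrization of $\zeta$ into $\mathrm{Re}\,w$. Then $\bcm\wedge\bar{\bcm}\wedge\bck\neq 0$ lets me pick a transverse coordinate $r$ with $\bck = \ext r + P\,\ext\zeta + \bar P\,\ext\bar\zeta$; computing $\ext\bck$ and matching against the third equation of \eqref{NullCartan1} (using \eqref{Null1forms} with $\sigma=\rho=\kappa=0$, so $\bgam_{41}=0$, and the expressions for $\alpha,\beta,\epsilon$) fixes the $r$-dependence of $P$ to be affine, yielding the stated form $P = X + r\,\bar w_{,\zeta}$ after using the $A=1$ gauge to normalize $\bck$.

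Next I would turn to the second structure equations \eqref{NullCartan2Einstein}. The equation for $\tfrac12\ext(\bgam_{12}+\bgam_{34})$ with $\bgam_{14}=\bgam_{41}=0$ reduces to a purely two-dimensional curvature identity on the $(\zeta,\bar\zeta)$-leaf, which determines $\Psi_2$ in terms of the Gaussian-type curvature of the metric $2\bcm\bar{\bcm} = 2e^{w+\bar w}\ext\zeta\,\ext\bar\zeta$; this is where $\Psi_2 = R/24 + \bar w_{,\zeta\bar\zeta}e^{-(w+\bar w)}$ emerges, and it also forces $\Psi_1 = 0$ (the $\bck\wedge\bcm$ coefficient). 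The equation for $\ext\bgam_{23}$ then becomes a first-order linear PDE in $r$ for the remaining unknown connection coefficient $\Gamma^4{}_{21}$: its $\bck\wedge\bcm$ part reproduces $\Psi_0=0$ (consistent with \eqref{Psi0}), its $\bcm\wedge\bar{\bcm}$ part is an ODE in $r$ whose integration introduces the real integration "constant" $S(\zeta,\bar\zeta)$, and its $\bck\wedge\bar{\bcm}$ part delivers the formula for $\Psi_3$. Reading off the connection one-forms $\bgam_{21},\bgam_{23},\bgam_{34},\bgam_{12}$ via \eqref{Null1forms} and converting to index notation with $\bgam_{23}=-\bgam^4{}_2$ etc.\ then gives precisely the table \eqref{Con:Prop1}; the vanishing entries $\Gamma^2{}_{4a}=0$ are immediate from $\bgam_{41}=0$, and $\Gamma^4{}_{22}=\Gamma^4{}_{24}=0$ follow from the absence of $\bar{\bcm}\wedge\bck$ and $\bcm\wedge\bar{\bcm}$-independent pieces in $\bgam_{23}$ once $\mu,\pi$ are computed.

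The main obstacle will be bookkeeping the gauge-fixing: one must verify that the residual freedom in \eqref{grp:CfrmlCarroll} — the phase $\phi$, the scaling $A$, the shift $z$, plus coordinate changes $(\zeta,r)\mapsto(\tilde\zeta(\zeta),\tilde r)$ — is exactly enough to bring a general solution to the normal form \eqref{Cfrm:Prop1} and no more, so that $w$, $X$, $S$ genuinely parametrize inequivalent solutions. Concretely, I would track how each structure equation, after solving, leaves a function of integration, and check that holomorphic reparametrizations of $\zeta$ act on $\mathrm{Re}\,w$, that $A$ has been spent setting the coefficient of $\ext r$ in $\bck$ to $1$, that $\phi$ has been spent on $\mathrm{Im}\,w$, and that the $z$-shift has been used to remove a would-be $\bck$-component of $\bgam_{23}$ (i.e.\ to set a combination of $\mu,\pi$ to a canonical value). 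The remaining computations — substituting the ansatz into \eqref{NullCartan1}–\eqref{NullCartan2Einstein}, collecting coefficients of the basis 2-forms $\ext r\wedge\ext\zeta$, $\ext\zeta\wedge\ext\bar\zeta$, $\ext r\wedge\ext\bar\zeta$, and solving the resulting linear ODEs in $r$ — are routine though lengthy, and are best relegated to the proof of the proposition in the subsection that follows.
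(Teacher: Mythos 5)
Your proposal follows essentially the same route as the paper's proof: integrate the pulled-back exterior differential system step by step, spending the gauge parameters $A,\phi$ on normalizing $\bgam_{12}+\bgam_{34}$ and $z$ on $\bgam_{23}$, introducing the coordinate $r$ adapted to $\bk$, and reading off $w$, $X$, $S$ as integration functions of the resulting ODEs in $r$. One bookkeeping point your sketch glosses over: the integration function arising from $q_{,r}=\tfrac{R}{8}e^{w}+\bar w_{,\zeta\bar\zeta}e^{-\bar w}$ is a priori complex, and its imaginary part is only fixed (leaving the single real function $S$) by cross-checking against the $\ext\bck$ structure equation rather than the $\ext\bgam_{23}$ equation alone; likewise $\Psi_1=0$ (and $\Psi_0=0$) drops out most directly from the $\ext\bgam_{14}$ equation, since $\bgam_{14}=0$ forces its right-hand side to vanish identically.
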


\begin{proposition} \label{Prop:NotNEHCarroll}
    Any solution to the Einstein structure equations \eqref{NullCartan2Einstein} on a null hypersurface $\mathcal{N}$, satisfying $\rho \neq 0$ and  $\sigma = 0$ is gauge equivalent to
    \beq
    \begin{aligned}
        \bcm = \frac{d\zeta}{\rho}, \quad \bck = - \ext \left(\frac{1}{\rho}\right)+\frac12 \rho(x\ext \zeta + \bar{x}d\bar{\zeta}),
    \end{aligned} \label{Crfm:Prop2}
    \eeq

    \noindent with the following connection coefficients and their complex conjugates:
    \begin{widetext}
    \begin{eqnarray*}
    \begin{aligned}
        &\Gamma^2_{~41} = 0,\quad \Gamma^2_{~42} = - \rho, \quad \Gamma^2_{~44} = 0, \\
        & \Gamma^2_{~21} = -\frac12 x\rho^3,\quad \Gamma^2_{~22} =  \frac12 \bar{x} \rho^3, \quad \Gamma^2_{~24} = 0, \\
        & \Gamma^4_{~41} = -\frac12 x\rho^3,\quad \Gamma^4_{~42} =  - \frac12 \bar{x} \rho^3, \quad \Gamma^4_{~44} = 0, \\
        & \Gamma^4_{~21} = -\rho \left[ -\frac16 \rho^4 |x|^2 - \frac12 \rho^2 x_{,\bar{\zeta}} + \rho a - \frac{R}{24 \rho^2} \right],\\
        &\Gamma^4_{~22} = -\rho q, \Gamma^4_{~24} = 0. 
    \end{aligned} \label{Con:Prop2}
    \end{eqnarray*}
    \end{widetext}
    \noindent Here, $x = x(\zeta, \bar{\zeta}), q=q(\zeta, \bar{\zeta})$ are arbitrary complex-valued functions, and $a = a(\zeta, \bar{\zeta})$ is an arbitrary real-valued function. 
\vspace{3 mm} 

    The pulled back NP Weyl scalars are 

\beq 
\begin{aligned}
    \Psi_0 &= 0, \\
    \Psi_1 &= x \rho^4,\\
    \Psi_2 &= a\rho^3 + \frac23 \rho^6 x \bar{x} + \rho^4 x_{,\bar{\zeta}} \\
    \Psi_3 &= -\frac13 \rho^8 x \bar{x}^2 - \rho^6 \left(\frac{(\bar{x} x)_{,\bar{\zeta}}}{6} + \frac12 \bar{x} x_{,\bar{\zeta}} \right) - \frac12 \rho^5 \bar{x} a + \rho^4 \left( \frac12 x_{,\bar{\zeta} \bar{\zeta}} + q x \right) - \rho^4 a_{,\bar{\zeta}} + \rho^2 \left( q_{,\zeta} - \frac{R}{24} \bar{x} \right).  
\end{aligned}
\eeq

\end{proposition}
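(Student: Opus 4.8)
The plan is to integrate, step by step, the exterior differential system formed by the pulled-back first structure equations~\eqref{NullCartan1} and the Einstein second structure equations~\eqref{NullCartan2Einstein}, spending the gauge freedom~\eqref{grp:CfrmlCarroll} as we proceed. The starting point is the set of simplifications already in force on $\mathcal{N}$: $\kappa=0$ and $\mathrm{Im}\,\rho=0$ follow from $\varphi^{*}\ext\bcl=0$, $\sigma=0$ is the hypothesis of this case, and hence $\Psi_{0}=0$ by~\eqref{Psi0}. With these the pulled-back connection forms~\eqref{Null1forms} collapse to $\bgam_{41}=\rho\,\bar{\bcm}$ with $\rho$ real and nowhere vanishing, while $\bgam_{23}$ and $\tfrac12(\bgam_{12}+\bgam_{34})$ are still unconstrained.

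First I would fix coordinates adapted to the structure. The null Raychaudhuri equation~\eqref{Raychaudhuri}, which with $S_{44}=0$ (Einstein) and $\sigma=0$ reads $\bk(\rho)=\rho^{2}$, shows that $1/\rho$ is an affine parameter along the geodesic generators of $\mathcal{N}$, so it serves as the coordinate transverse to the screen leaves. Since $\bcm(\bk)=0$ and, by the $\ext\bcm$ equation of~\eqref{NullCartan1}, $\mathcal{L}_{\bk}\bcm\in\langle\bcm,\bar{\bcm}\rangle$, the complex one-form $\bcm$ descends to the two-dimensional leaf space; there $\bcm\wedge\bar{\bcm}\neq0$, so $\bcm$ is locally $f\,\ext\zeta$ for a complex coordinate $\zeta$ constant along $\bk$. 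The rotation $\bcm\mapsto e^{i\phi}\bcm$ then makes $f$ real and positive, and relabelling fixes it to $1/\rho$, i.e. $\bcm=\ext\zeta/\rho$. Substituting this into the $\ext\bck$ equation of~\eqref{NullCartan1} and using the null rotation $z$ of~\eqref{grp:CfrmlCarroll} to normalize the horizontal part of the Ehresmann connection $\bck$, one finds $\bck=-\ext(1/\rho)+\tfrac12\rho(x\,\ext\zeta+\bar x\,\ext\bar\zeta)$ with $x$ complex-valued, and the structure equations force $x=x(\zeta,\bar\zeta)$. This is~\eqref{Crfm:Prop2}.

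With $(\bcm,\bck)$ in this normal form, every connection form except $\bgam_{23}$ is obtained by computing $\ext\bcm$ and $\ext\bck$ from~\eqref{Crfm:Prop2} and comparing with~\eqref{NullCartan1}: this yields the rows $\Gamma^{2}{}_{4a}$ (from $\bgam_{41}=\rho\bar{\bcm}$), $\Gamma^{2}{}_{2a}$, $\Gamma^{4}{}_{4a}$, and $\Gamma^{4}{}_{24}=0$ of~\eqref{Con:Prop2}. The form $\bgam_{23}=-\bgam^{4}{}_{2}$ has had its $\bck$-component (the spin coefficient $\pi$) removed by the already-spent null rotation, leaving its $\bar{\bcm}$-component as one free complex function $q$ and its $\bcm$-component still to be determined. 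Inserting all of this into~\eqref{NullCartan2Einstein} closes the system: the $\ext\bgam_{14}$ line gives $\Psi_{1}=x\rho^{4}$; the $\tfrac12\ext(\bgam_{12}+\bgam_{34})$ line gives $\Psi_{2}$ in terms of $x$, $a$, $R$; and the three components of the $\ext\bgam_{23}$ line fix $\bgam_{23}$'s $\bcm$-component along the generators up to a real leaf-space integration constant $a=a(\zeta,\bar\zeta)$, force $q=q(\zeta,\bar\zeta)$, and produce $\Psi_{3}$. Collecting these reproduces~\eqref{Con:Prop2} and the stated Weyl scalars; reversing the steps gives the converse, that every $(\rho,\zeta,x,q,a)$ of this form solves~\eqref{NullCartan1}--\eqref{NullCartan2Einstein}.

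The main obstacle is the normalization in the second paragraph: verifying that the gauge group~\eqref{grp:CfrmlCarroll} is exactly large enough to reach the claimed form of $(\bcm,\bck)$, so that the genuinely free data are precisely $x,q,a$ (modulo the residual pure-boost rescaling of $\rho$). Concretely, using $z$ to annihilate the $\bck$-component of $\bgam_{23}$ amounts to solving $\ext z=(\text{an explicit one-form})$ up to the quadratic terms in~\eqref{eqn:GammaTransfs}, whose integrability must be checked against the structure equations, and promoting $\zeta$ to an honest coordinate requires the complex-Frobenius/isothermal-coordinate argument on the leaf space. Once the normal form is secured, the remainder -- substituting it into~\eqref{NullCartan1} and~\eqref{NullCartan2Einstein} and reading off the coefficients and $\Psi_{i}$ -- is lengthy but entirely mechanical, and parallels the $\rho=0$ case of Proposition~\ref{Prop:NEHCarroll}.
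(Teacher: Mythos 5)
Your overall strategy --- integrate the pulled-back exterior differential system while spending the gauge freedom \eqref{grp:CfrmlCarroll}, then read off the connection coefficients and Weyl scalars --- is the same as the paper's, and your opening reductions ($\bgam_{41}=\rho\bar{\bcm}$ with $\rho$ real, $\Psi_0=0$, $1/\rho$ as transverse coordinate via \eqref{Raychaudhuri}) are sound. The paper reaches $\bcm=\ext\zeta/\rho$ by a slightly different route (the Frobenius condition $\ext\bgam_{41}\wedge\bgam_{41}=0$ plus the boost/rotation freedom to set $\bgam_{41}=-\ext\bar\zeta$, rather than a leaf-space isothermal-coordinate argument); that difference is cosmetic, except that your ``relabelling fixes $f$ to $1/\rho$'' is really the expenditure of the boost parameter $A$, since $\rho$ is a fixed spin coefficient and not a label you are free to adjust independently of the frame.

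The genuine gap is in your derivation of $\bck$ and of the meaning of $x$. You claim to obtain \eqref{Crfm:Prop2} by substituting into the $\ext\bck$ equation of \eqref{NullCartan1} and ``using the null rotation $z$ to normalize the horizontal part of $\bck$.'' This cannot work as stated: the $\ext\bck$ equation involves $\bgam_{23}$ and $\bgam_{34}$, which are not yet determined at that stage, and if you spend $z$ on the horizontal part of $\bck$ you no longer have it available to remove the $h\,\bgam_{14}$ term from $\bgam_{12}+\bgam_{34}$ --- the normalization that the stated connection coefficients (e.g.\ $\Gamma^2_{~24}=\Gamma^4_{~24}=0$) presuppose. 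In the paper the order is the reverse: $z$ is spent setting $\bgam_{12}+\bgam_{34}=-(\Psi_1/\rho^2)\,\ext\zeta$, after which the horizontal part of $\bck$ is \emph{determined} (not chosen) by the $\ext\bcm$ equation to be $\tfrac12(\Psi_1/\rho^3)\ext\zeta+\mathrm{c.c.}$; this is precisely where $x$ acquires its meaning as $\Psi_1/\rho^4$. Moreover, the fact that $\Psi_1/\rho^4$ depends only on $(\zeta,\bar\zeta)$ --- which you merely assert the structure equations ``force'' --- is a nontrivial consequence of the third equation of \eqref{NullCartan2Einstein}, via \eqref{Case2:Psi1DE} and \eqref{BetterPsi1DE}, and that argument divides by $\Psi_1$, so the branch $\Psi_1=0$ must be treated separately (it is recovered by setting $x=0$). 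Without these steps the specific powers of $\rho$ in \eqref{Crfm:Prop2}, in $\Gamma^2_{~21}=-\tfrac12 x\rho^3$, and in $\Psi_1=x\rho^4$ are unaccounted for.
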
 

We note that the cosmological constant from the original four-dimensional spacetime does not affect the coframe. Instead it affects the connection coefficients and the pulled back NP Weyl scalars $\Psi_2$ and $\Psi_3$. In particular, this implies that the degenerate metric for asymptotically flat spacetimes is identical to those for non-trivial Einstein spacetimes (like Kerr-de Sitter) in the given coordinate systems.

\subsection{Proof of Proposition \ref{Prop:NEHCarroll}}

Setting $\rho = \sigma = 0$ in $\bgam_{41}$ in \eqref{Null1forms}, this implies that $\bgam_{41} = 0.$ Imposing this condition, the second equation in \eqref{NullCartan1} vanishes automatically, while the first equation in \eqref{NullCartan1}, implies that \beq 
   \ext \bcm \wedge \bcm = - \bgam_{21} \wedge \bcm \wedge \bcm  = 0
\eeq
\noindent and so there exists (see for example \cite{PawelNotes2011}) two complex functions on $\mathcal{N}$,  $w$ and $\zeta$ such that 
\beq
    \bcm = e^w\ext\zeta \label{Case1:M}\,.
\eeq
\noindent Since $\bcm \wedge \bar{\bcm} \neq 0$, it follows that $e^{2Re(w)}\ext \zeta \wedge\ext \bar{\zeta} \neq 0$.

Imposing $\bgam_{41} = 0$ on the equations in \eqref{NullCartan2Einstein}, the second equation implies that $\Psi_1 =0$ while the third equation implies 
\beq 
   \ext(\bgam_{12} + \bgam_{34}) \wedge \bcm =\ext(\bgam_{12} + \bgam_{34}) \wedge ( e^w \ext\zeta) = 0. 
\eeq
\noindent This implies the existence of two complex functions $\eta$ and $h$ such that $\bgam_{12} + \bgam_{34} =\ext\eta + h\ext \zeta$ \cite{PawelNotes2011}. Using the frame freedom \eqref{grp:CfrmlCarroll} we can fix the parameters $A$ and $\phi$ so that 
\beq 
    \bgam_{12}+\bgam_{34} = h\ext  \zeta \,.\label{Case1:Gam12Gam34}
\eeq
\noindent Now inserting this 1-form into the third equation in \eqref{NullCartan2Einstein}, we find 
\beq 
    \ext h \wedge\ext\zeta = 2 e^{w+\bar{w}}\left(\Psi_2 - \frac{R}{24} \right)\ext \zeta \wedge\ext\bar{\zeta}\,.
\eeq\noindent It follows that $h$ is a function of $\zeta$ and $\bar{\zeta}$ only, and further that 
\beq
    h_{,\bar{\zeta}} = 2 \left(\frac{R}{24} - \Psi_2 \right)e^{w+\bar{w}}\,. \label{Case1:heqn}
\eeq
Using the fact that $\bar{\bgam}_{34} =\bgam_{34}$   and $\bar{\bgam}_{12} =-\bgam_{12}$, it follows  from the imaginary part of equation \eqref{Case1:Gam12Gam34}, that $\bgam_{12} = \frac12 (h\ext \zeta - \bar{h}d\bar{\zeta})$. We may plug this into the first equation of \eqref{NullCartan1} to find that $w=w(\zeta, \bar{\zeta})$ only and 
\beq 
    h = - 2\bar{w}_{,\zeta}.
\eeq
\noindent From \eqref{Case1:heqn}, and the above identity, we may solve for $\Psi_2$
\beq
    \Psi_2 = \frac{R}{24} + \bar{w}_{,\zeta \bar{\zeta}} e^{-(w+\bar{w})}.
\eeq
\noindent Furthermore, the first equation in \eqref{NullCartan2Einstein} gives
\beq 
   \ext \bgam_{23} = - 2 \bar{w}_{,\zeta}\ext \zeta \wedge \bgam_{23} - \Psi_3 e^{w+\bar{w}}\ext \zeta \wedge\ext \bar{\zeta} + \left( \frac{R}{8} + \bar{w}_{,\zeta \bar{\zeta}} e^{-(w+\bar{w})}\right) e^w \bck \wedge\ext\zeta. \label{Case1: dGam23}
\eeq
\noindent Thus, we find that $\ext \bgam_{23} \wedge \ext \zeta =0$, and there must exist two complex-valued functions on $\mathcal{N}$, $v$ and $q$ such that $\bgam_{23} = \ext v + q \ext \zeta$ \cite{PawelNotes2011}. Fixing the $z$-parameter in \eqref{grp:CfrmlCarroll}, this can be set to be
\beq 
    \bgam_{23} = q\ext \zeta. \label{Case1:Gam23}
\eeq
\noindent We note that all of the frame transformation parameters have been fixed now. 

Inserting equation \eqref{Case1:Gam23} into equation \eqref{Case1: dGam23} we get
\beq 
    \ext q = \Psi_3 e^{w+\bar{w}}\ext \bar{\zeta} + \left( \frac{R}{8} + \bar{w}_{,\zeta \bar{\zeta}} e^{-(w+\bar{w})}\right)e^w \bck + s\ext \zeta, \label{Case1:dq}
\eeq
\noindent where $s$ is an arbitrary complex-valued function. Using this in the third equation in \eqref{NullCartan1}, we have
\beq 
   \ext \bck = (\bar{q}e^w - q e^{\bar{w}})\ext \zeta \wedge\ext\bar{\zeta} + \bck \wedge [\bar{w}_{,\zeta}\ext \zeta + w_{,\bar{\zeta}}\ext \bar{\zeta}] \,.\label{Case1:dK}
\eeq

Supplementing $\zeta$ and $\bar{\zeta}$ by a real variable $r$ to produce a coordinate system $(r, \zeta, \bar{\zeta})$ on $\mathcal{N}$ such that $\bk = \partial_r$, we may express $\bck$ as
\beq
    \bck =\ext r + Y\ext \zeta + \bar{Y}\ext\bar{\zeta},
\eeq
\noindent where $Y$ is a complex-valued function of $(r, \zeta, \bar{\zeta})$. Plugging this expression for $\bck$ into equation \eqref{Case1:dK} we find
\beq
    dY = q e^{\bar{w}}\ext \bar{\zeta} + \bar{w}_{,\zeta}[dr + \bar{Y}d\bar{\zeta}] + ud\zeta + t\ext \bar{\zeta}, \label{Case1:dY}
\eeq
\noindent where $t$ and $u$ are real-valued and complex-valued functions on $\mathcal{N}$, respectively. 

This equation is equivalent to a system of partial differential equations (PDEs):
\beq
\begin{aligned}
    Y_{,r} &= \bar{w}_{,\zeta},\\
    Y_{,\bar{\zeta}} &= \bar{w}_{,\zeta} \bar{Y}+qe^{\bar{w}} + t, \\
    Y_{,\zeta} &= u\,.
\end{aligned} \label{Case1:Ypde}
\eeq
\noindent Integrating the first equation of \eqref{Case1:Ypde}, we have
\beq
    Y = r\bar{w}_{,\zeta} + X,
\eeq
\noindent where $X = X(\zeta, \bar{\zeta})$ is an arbitrary complex-valued function. Using this expression for $Y$ in the second equation of \eqref{Case1:Ypde} and taking the imaginary part gives
\beq
    r(\bar{w}_{,\zeta \bar{\zeta}}-w_{,\zeta \bar{\zeta}})+ X_{,\bar{\zeta}} - \bar{X}_{,\zeta} = \bar{w}_{,\zeta} \bar{X} - w_{,\bar{\zeta}} X + qe^{\bar{w}}-\bar{q}e^w.  \label{Case1:ugly}
\eeq

If we transcribe equation \eqref{Case1:dq} into a system of PDEs, we have
\beq
\begin{aligned}
    q_{,r} &= \frac{R}{8} e^w + \bar{w}_{,\zeta \bar{\zeta}} e^{-\bar{w}},\\
    q_{,\bar{\zeta}} &= \Psi_3 e^{w+\bar{w}} + \left( \frac{R}{8} e^w + \bar{w}_{,\zeta \bar{\zeta}} e^{-\bar{w}}\right) \bar{Y}.
\end{aligned} \label{Case1:qpde}
\eeq

\noindent Integrating the first equation in \eqref{Case1:qpde} we find
\beq 
    q = \left[ \frac{R}{8} e^w + \bar{w}_{,\zeta \bar{\zeta}} e^{-\bar{w}} \right] r + Q e^{-\bar{w}} 
\eeq
\noindent where $Q = Q(\zeta, \bar{\zeta})$ is an arbitrary complex-valued function. Note this integration required that $R$ is constant, which followed from the Einstein condition. Inserting this into equation \eqref{Case1:ugly} we find a constraint on the imaginary part of $Q$:
\beq 
Q-\bar{Q} = X_{,\bar{\zeta}} - \bar{X}_{,\zeta} - \bar{w}_{,\zeta} \bar{X} + w_{,\bar{\zeta}} X. 
\eeq
\noindent Writing $Q + \bar{Q} = S$, this is an arbitrary  real-valued function of $\zeta$ and $\bar{\zeta}$. Using the second equation in \eqref{Case1:qpde} we may solve for $\Psi_3$. We have now fully integrated out the differential conditions for $\bcm, \bck$ and the connection 1-forms on the null hypersurface $\mathcal{N}$, the results of which are summarized in Proposition \ref{Prop:NEHCarroll}.



\subsection{Proof of Proposition \ref{Prop:NotNEHCarroll}}

Setting $\sigma = 0$ in $\bgam_{41}$ in \eqref{Null1forms}, this implies that 
\beq 
    \bgam_{41} = \rho \bar{\bcm},\quad \bar{\rho} = \rho. \label{Case2:Gam41a}
\eeq
\noindent Furthermore, from equation \eqref{Psi0}, it follows that 
\beq
    \Psi_0 = 0.
\eeq
\noindent The first equation in \eqref{NullCartan1} with equation~\eqref{Case2:Gam41a} implies that $\ext \bgam_{41} \wedge \bgam_{41} =0$ which in turn implies the existence of two complex-valued functions $f$ and $\zeta$ such that $\bgam_{14} = \bar{f}\ext \bar{\zeta}$. Using the $A$ and $\phi$ parameters in the frame transformation group \eqref{grp:CfrmlCarroll} we may set $\bar{f} =-1$ giving
\beq 
    \bgam_{41} = - \ext \bar{\zeta}. \label{Case2:Gam41b}
\eeq
Then equating \eqref{Case2:Gam41a} with \eqref{Case2:Gam41b} gives
\beq 
    \bcm = -\frac{\ext \zeta}{\rho} \label{Case2:M}.
\eeq

Using \eqref{Case2:Gam41b} in the second equation in \eqref{NullCartan2Einstein} we can express another connection 1-form:
\beq
    \bgam_{12} + \bgam_{34} =  \frac{\Psi_1}{\rho^2} \ext \zeta + h \bgam_{14},
\eeq

\noindent where $h$ is an arbitrary complex-valued function on $\mathcal{N}$. We now may fix the remaining $z$ parameter in the frame transformation group \eqref{grp:CfrmlCarroll} to set $h=0$ and yield
\beq
    \bgam_{12}+ \bgam_{34} =  \frac{\Psi_1}{\rho^2} \ext \zeta. \label{Case2:Gam12Gam34}
\eeq


Taking $\bcm$, $\bgam_{12}$ and $\bgam_{24}$ in equation \eqref{Case2:M}, the real part of equation \eqref{Case2:Gam12Gam34}, and the complex conjugate of \eqref{Case2:Gam41a}, respectively, then the first equation in \eqref{NullCartan1} we can solve explicitely for the remaining coframe element:
\beq
    \bck = - \ext \left( \frac{1}{\rho} \right) - \frac12 \frac{\Psi_1}{\rho^3} \ext \zeta -\frac12 \frac{\bar{\Psi}_1}{\rho^3} \ext \bar{\zeta}. \label{Case2:K}
\eeq
\noindent In addition, since $\rho$ is real-valued, we can adapt $(\rho, \zeta, \bar{\zeta})$ as a coordinate system on $\mathcal{N}$.

Substituting $\bgam_{12} + \bgam_{34}$ in \eqref{Case2:Gam12Gam34} into the third equation of \eqref{NullCartan2Einstein}, it follows that there exists three complex-valued functions on $\mathcal{N}$: $p,q$ and $s$ such that 
\beq 
    \bgam_{23} = \frac{p}{\rho^2} \ext \zeta + q \ext \bar{\zeta}, \label{Case23}
\eeq
\noindent along with
\beq
    \rho^4 \ext \left( \frac{\Psi_1}{\rho^4} \right) + \left( 2 p - \frac{R}{12} + 2\Psi_2 + \frac{\Psi_1 \bar{\Psi}_1}{\rho^2}\right) \ext \bar{\zeta} + 2 s \ext  \zeta = 0 \label{Case2:Psi1DE}
\eeq
\noindent Notice that this second equation implies that 
\beq 
    \rho^4 \ext \left( \frac{\Psi_1}{\rho^4} \right) \wedge \ext \zeta \wedge \ext \bar{\zeta} =0. \label{BetterPsi1DE}
\eeq

There are two cases, depending on whether $\Psi_1$ vanishes or not. We will consider the case when $\Psi_1 \neq 0$ first. Then equation \eqref{BetterPsi1DE} implies that there is a function $x = x(\zeta, \bar{\zeta})$ such that 
\beq 
    \Psi_1 = \rho^4 x. \label{Case2:Psi1}
\eeq
\noindent Setting this into \eqref{Case2:Psi1DE} allows for $p$ and $s$ to be solved for algebraically:
\beq 
    \begin{aligned}
        p &= - \Psi_2 + \frac{R}{24} - \frac12 \rho^4 (x_{,\bar{\zeta}} + \rho^2 x \bar{x}), \\
        s &= -\frac12 \rho^4 x_{,\zeta}.
   \end{aligned} \label{Case2:p}
\eeq

With the function, $p$, set into the third equation in \eqref{NullCartan1}, it can only be satisfied if
\beq 
    \bar{\Psi}_2 - \Psi_2 = \rho^4 ( \bar{x}_{,\zeta} - x_{,\bar{\zeta}}). \label{case2:ImPsi2}
\eeq
\noindent The remaining equation which must be examined is the first equation of \eqref{NullCartan2Einstein}. Computing the $\ext \rho \wedge \ext \zeta$ and $\ext \rho \wedge \ext \bar{\zeta}$ terms:
\beq
\begin{aligned}
    q_{,\rho} & = 0\,, \\
    \Psi_2 &= a(\zeta, \bar{\zeta}) \rho^3 + \frac23 \rho^6 x \bar{x} + \rho^4 x_{,\bar{\zeta}}\,,
\end{aligned}
\eeq

\noindent where $a$ is an arbitrary function of integration dependent on $\zeta$ and $\bar{\zeta}$. The first equation implies that $q$ must be independent of $\rho$, i.e., $q = q(\zeta, \bar{\zeta})$. The second equation is compatible with equation \eqref{case2:ImPsi2} if and only if $a = \bar{a}$, i.e., $a$ is a real-valued function. 

Finally, the $\ext \zeta \wedge \ext \bar{\zeta}$ term in the first equation in \eqref{NullCartan2Einstein} gives an algebraic expression for $\Psi_3$
\beq 
    \Psi_3 = -\frac13 \rho^8 x \bar{x}^2 - \rho^6 \left(\frac{(\bar{x} x)_{,\bar{\zeta}}}{6} + \frac12 \bar{x} x_{,\bar{\zeta}} \right) - \frac12 \rho^5 \bar{x} a + \rho^4 \left( \frac12 x_{,\bar{\zeta} \bar{\zeta}} + q x \right) - \rho^4 a_{,\bar{\zeta}} + \rho^2 \left( q_{,\zeta} - \frac{R}{24} \bar{x} \right).
\eeq
\noindent This solves the equations \eqref{NullCartan1} and \eqref{NullCartan2Einstein} in the case that $\Psi_1 \neq 0$. We note that solutions to these equations when $\Psi_1 = 0$ can be formally obtained by setting $x=0$. The results of this analysis are summarized in Proposition \ref{Prop:NotNEHCarroll}.



\section{Shear-free Null hypersurface structures as Carrollian manifolds}

For $\mathcal{N}$ an NHS embedded in a shear-free Einstein spacetime, there is automatically a degenerate metric, ${\bf h} = 2 \bcm \bar{\bcm}$. Following from Proposition \ref{Prop:NEHCarroll} and \ref{Prop:NotNEHCarroll}, where the coframe basis and connection have been determined for all possible shear-free null hypersurfaces, we may view $\bck$ as an Ehresmann connection. Then, by computing the dual of the coframe, $(\bm, \bar{\bm}, \bk)$, we may recover a vector-field that exists in the tangent space of $\mathcal{N}$ that we use as the fundamental vector field:

\beq
\begin{aligned}
    \text{Proposition \ref{Prop:NEHCarroll}} (\sigma = \rho = 0)&:~ \bm = -(X+r\bar{w}_{,\zeta}) \partial_r + e^{w} \partial_{\zeta}, \quad \bar{\bm} =  -(\bar{X}+r w_{,\bar{\zeta}})\partial_r + e^{\bar{w}} \partial_{\bar{\zeta}}, \quad \bk = \partial_r \\
    \text{Proposition \ref{Prop:NotNEHCarroll} } (\sigma = 0, \rho \neq 0)&:~ \bm = -\frac{x \rho^4}{2} \partial_\rho + \rho \partial_{\zeta}, \quad \bar{\bm} = -\frac{\bar{x} \rho^4}{2} \partial_\rho + \rho \partial_{\bar{\zeta}}, \quad \bk = \rho^2 \partial_\rho.
\end{aligned} \label{Props:frame}
\eeq

\noindent Thus, the NHS may be treated as a Carrollian structure $(\mathcal{N}, {\bf h}, \bk)$ through a  restriction of the frame transformation group to the Carrollian group by fixing the $A$ parameter in equation \eqref{grp:kfixed}. Furthermore, this Carrollian structure is equipped with a distinguished affine connection, and so it is actually a Carrollian manifold.

We have seen that both the coframe and the connection for a NHS can be determined by solving the exterior differential system arising from the pullback of the Cartan structure equations. This was done in an invariant manner using the frame transformation group in equation \eqref{grp:CfrmlCarroll} to simultaneously determine the solutions and invariantly classify them by normalizing tensor quantities in the exterior differential system independent of any coordinate system. While the frame parameters in equation \eqref{grp:CfrmlCarroll} have been fixed, once the solutions have been found, one may transform away from the frame to achieve a different normalization. While we have required that the connection is torsion-free and metric-compatible in the bulk, these conditions do not necessarily hold for the induced connection on the NHS. 

Taking the degenerate metric, it is a straightforward calculation to determine the nonmetricity tensor, ${\bf Q} = \nabla {\bf h}$,

\beq
\begin{aligned}
    \sigma = \rho = 0 &:~ {\bf Q} = 0, \\
    \sigma = 0, \rho \neq 0 &:~ {\bf Q} = \rho (\bcm \otimes \bck \otimes \bar{\bcm} + \bar{\bcm} \otimes \bck \otimes \bcm + \bck \otimes {\bf h} )
\end{aligned}
\eeq

Similarly, computing the intrinsic torsion of the connection for arbitrary vector-fields, ${\bf X}$ and ${\bf Y}$, in the tangent space of $\mathcal{N}$:
\beq 
    {\bf T}({\bf X}, {\bf Y}) = \nabla_{{\bf X}} {\bf Y} - \nabla_{{\bf Y}} {\bf X} - [{\bf X}, {\bf Y}], \label{IntrinsicTorsion}
\eeq
\noindent we find that in the case that $\sigma = \rho = 0$ for $\mathcal{N}$, the torsion tensor is then
\beq
\begin{aligned}
    {\bf T} &= 2 e^{w - \bar{w}} \left[(  \bar{w}_{,\zeta \bar{\zeta}} - w_{,\zeta \bar{\zeta}}  )r + X w_{,\bar{\zeta}}- \bar{X} \bar{w}_{,\zeta}-\bar{X}_{,\zeta}+X_{,\bar{\zeta}}   \right] \bk \otimes (\bcm \wedge \bar{\bcm}) \\ & \quad + 2 \bar{w}_{\zeta} e^{-w} \bk \otimes (\bck \wedge \bar{\bcm}) + 2 w_{\bar{\zeta}} e^{-\bar{w}} \bk \otimes (\bck \wedge \bcm). 
\end{aligned} \label{Case1:Torsion}
\eeq
\noindent while in the case where $\sigma = 0$ and $\rho \neq 0$ the torsion tensor is
\beq 
\begin{aligned}
    {\bf T} &=  \frac{\rho^3}{2}(\bar{x}_{,\zeta}-x_{,\bar{\zeta}}) \bk \otimes  (\bcm \wedge \bar{\bcm}) + 2 \rho^3 \bar{x} \bk \otimes  (\bck \wedge \bar{\bcm}) + \rho^3 x \bk \otimes (\bck \wedge \bcm). 
\end{aligned} \label{Case2:Torsion}
\eeq

With the data contained in Propositions \ref{Prop:NEHCarroll} and \ref{Prop:NotNEHCarroll}, we may also compute the curvature tensor for these Carrollian geometries. The arbitrary functions that are contained in the connection coefficients but not the frame or coframe appear within the components of the curvature tensor. We note further that while expressions similar to $\Psi_1$ and $\Psi_2$ appear in the components of the curvature tensor, $\Psi_3$ and $\Psi_4$ will not. To understand this, observe that a similar phenomenon occurs even for non-null hypersurfaces. Indeed, for any signature hypersurface and a geometrically-determined (co)frame along it, differentiation tangent to the hypersurface can only extract components of the bulk curvature tensor contracted with at most one copy of the vector field transverse to the hypersurface. As the constructed intrinsic covariant derivative on the Carrollian manifold is equal to the tangential Levi-Civita action modified by an extrinsic invariant that is first-order in derivatives on the bulk metric, the Carrollian curvature can only pick up components of the bulk curvature containing at most one copy of the transverse vector field. Because both $\Psi_3$ and $\Psi_4$ are curvature components involving two copies of the transverse vector field, the fact that we do not see these components in the Carrollian curvature is expected.

The connections we have found for the shear-free NHS embedded in Einstein spacetimes are new. Since $\nabla {\bf h} \neq 0$ these connections are not metric compatible   \cite{blitz2023horizons}. Of course, if ${\bf Q} \neq {\bf 0}$ one could introduce a shift tensor, $\Delta \bgam$, 
\beq
    \Delta \bgam_Q = \rho (\bcm \otimes \bk \otimes \bcm + \bar{\bcm} \otimes \bk \otimes \bar{\bcm})
\eeq

\noindent so that $\bgam' =\bgam + \Delta \bgam$ is a metric compatible connection. However, for this new connection, it is not possible to require that $\nabla \bk = 0$ or that $\nabla \bck' = 0$ without imposing conditions on the arbitrary functions, as would be required if the connection constructed here could be modified to produce the connection in~\cite{blitz2023horizons}. This is expected as the connection discussed there is intended to describe the intrinsic properties of embedded null hypersurfaces as Carrollian structures. 

However, using the idea of a shift tensor we may construct a new torsion-free connection on $\mathcal{N}$ that is not metric compatible. We remark that this shift tensor is not quite a contorsion tensor as this would require raising and lowering of indices which is not permitted with a degenerate metric. For any NHS where $\sigma = \rho = 0$, we may use 
\beq
\begin{aligned}
    \Delta \bgam_T & = -2 e^{w - \bar{w}} \left[(  \bar{w}_{,\zeta \bar{\zeta}} - w_{,\zeta \bar{\zeta}}  )r + X w_{,\bar{\zeta}}- \bar{X} \bar{w}_{,\zeta}-\bar{X}_{,\zeta}+X_{,\bar{\zeta}}   \right] \bk \otimes \bcm \otimes \bar{\bcm} \\ 
    & \quad  -2 \bar{w}_{,\zeta} e^{-w} \bk \otimes \bcm \otimes \bck - 2 w_{,\bar{\zeta}}e^{-\bar{w}} \bk \otimes \bar{\bcm} \otimes \bck.
\end{aligned} \label{Shift:Prop1}
\eeq
\noindent while for the case where $\sigma = 0$ but $\rho \neq 0$ we may use
\beq
\begin{aligned}
        \Delta \bgam_T &= -\frac{\rho^3}{2}(\bar{x}_{,\zeta}-x_{,\bar{\zeta}}) \bk \otimes  \bcm \otimes \bar{\bcm} - 2 \rho^3 \bar{x} \bk \otimes  \bck \otimes \bar{\bcm} - \rho^3 x \bk \otimes \bck \otimes \bcm.
\end{aligned} \label{Shift:Prop2}
\eeq

Using $\bgam' = \bgam + \Delta \bgam_T$ we now have a torsion-free connection that still contains the arbitrary functions that arise from extrinsic information. This torsion-free connection is reminiscent of another choice of connection that is increasingly popular. Indeed, in the physics literature, this particular choice (summarized well in~\cite{freidel2024}) is also a connection that both contains extrinsic embedding data and is torsion-free (but is not metric compatible). 

In that context, this choice is made because it allows one to encode the bulk gravitational effects on the null hypersurface as aspects of the intrinsic geometry of the hypersurface, such as incoming gravitational radiation. The desired connection is determined by several constraints. First, one demands that the identity operator on the tangent bundle of the ruled Carrollian geometry is preserved under covariant differentiation and that the connection is symmetric. Then, for any two spacetime vector-fields $\bf{X}$ and $\bf{Y}$ tangent everywhere to $\mathcal{N}$, the action of the rigged connection~\cite{mars2023,mars2023-2} is uniquely fixed by the relation $\bar{\nabla}_{\bf X} {\bf Y} := (\nabla_{\bf X} {\bf Y})_{||}|_{\mathcal{N}}$, where the parallel part is determined by the projection operator built from the Ehresmann connection and its dual (which here is the rigging vector field). 

The pullback and projection of the resulting tensors then complete the specification of the connection. In this way, extrinsic embedding data is built in to the connection at the cost of no longer being adapted to the Carrollian structure.

It is thus worthwhile to consider when the connection constructed in this manuscript can be shifted to the connection built from a rigged connection. By definition a symmetric connection is torsion-free so this condition is already satisfied. Taking the connection in Propositions \ref{Prop:NEHCarroll} and \ref{Prop:NotNEHCarroll} and shifting them by the shift tensor in equation \eqref{Shift:Prop1} and \eqref{Shift:Prop2} respectively, we may verify that the identity operator, 
\beq
    {\bf q} = \bm  \otimes \bcm + \bar{\bm} \otimes \bar{\bcm} + \bk \otimes \bck,
\eeq
\noindent satisfies $\bar{\nabla} {\bf q} = 0$. Thus, the shifted connections can be treated as a rigged connection and we may specify the action of the torsion-free connections on the fundamental vector-field and Ehresmann connection.

\begin{corollary} \label{Cor:NEH}
    For any null hypersurface embedded in an Einstein spacetime with $\sigma = \rho = 0$ we may always construct a Carrollian geometry admitting a rigged connection where the fundamental vector-field has vanishing inaffinity, $\bck_b \bk^a \bar{\nabla}_a \bk^b$. 
\end{corollary}

\begin{proof}
    We use the torsion-free connection $\bgam' =\bgam + \Delta \bgam $ with $\bgam$ defined in Proposition \ref{Prop:NEHCarroll} and $\Delta \bgam$ defined in equation \eqref{Shift:Prop1}. Computing the covariant derivative of the fundamental vector field, $\bk$, and the Ehresmann connection, $\bck$, we find:
    \beq 
    \begin{aligned}
        \bar{\nabla} \bk &= \bk \otimes \left( \bar{w}_{,\zeta} e^{-w} \bcm + w_{,\bar{\zeta}}e^{-\bar{w}} \bar{\bcm} \right), \\
        \bar{\nabla} \bck &= \Gamma^4_{~(12)} {\bf h}  + (\Gamma^4_{~[12]} + [\Delta \Gamma_T]^3_{~[12]}) (\bcm \wedge \bar{\bcm}) \\
        & \quad + 2 \left( \bar{w}_{,\zeta} e^{-w} \bcm + w_{,\bar{\zeta}}e^{-\bar{w}} \bar{\bcm} \right) \otimes \bck + \bck \otimes \left( \bar{w}_{,\zeta} e^{-w} \bcm + w_{,\bar{\zeta}}e^{-\bar{w}} \bar{\bcm} \right).
    \end{aligned}
    \eeq
\end{proof}

\begin{corollary} \label{Cor:NotNEH}
    For any null hypersurface embedded in an Einstein spacetime with $\sigma=0$ and $\rho \neq0$ we may always construct a Carrollian geometry admitting a rigged connection where the fundamental vector-field has vanishing inaffinity, $\bck_b \bk^a \bar{\nabla}_a \bk^b$.
\end{corollary}

\begin{proof}
    We use the torsion-free connection $\bgam' =\bgam + \Delta \bgam $ with $\bgam$ defined in Proposition \ref{Prop:NotNEHCarroll} and $\Delta \bgam$ defined in equation \eqref{Shift:Prop2}. Computing the covariant derivative of the fundamental vector field, $\bk$, and the Ehresmann connection, $\bck$, we find:
    \beq 
    \begin{aligned}
        \bar{\nabla} \bk &= - \rho ({\bf q} - \bk \otimes \bck) - \bk \otimes \left( \frac{\rho^3 x}{2} \bcm + \frac{\rho^3 \bar{x}}{2} \bar{\bcm}\right), \\
        \bar{\nabla} \bck &= \Gamma^4_{~(12)} {\bf h}  + (\Gamma^4_{~[12]} + [\Delta \Gamma_T]^3_{~[12]}) (\bcm \wedge \bar{\bcm}) + \rho q \bcm \otimes \bcm + \rho \bar{q} \bar{\bcm} \otimes \bar{\bcm}  \\
        & \quad + \bk \otimes \left( \frac{\rho^3 x}{2} \bcm + \frac{\rho^3 \bar{x}}{2} \bar{\bcm}\right) + \left( \rho^3 x \bcm + \rho^3 \bar{x} \bar{\bcm}\right) \otimes \bk.
    \end{aligned}
    \eeq
\end{proof}

If we permit a scaling of $\bk$ and $\bck$ to preserve $\bck(\bk)=1$, then employing the Leibniz rule for covariant differentiation gives the next result.

\begin{corollary} \label{cor:AchangeL}
    For any shear-free NHS embedded in an Einstein spacetime, we may construct a Carrollian geometry with non-vanishing inaffinity by setting $\bk' = A^{-1} \bk$ and $\bck' = A \bck$ where $A$ is a function satisfying $\bk(A) \neq 0$.
\end{corollary}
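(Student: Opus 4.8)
The plan is to deduce this from Corollaries~\ref{Cor:NEH} and~\ref{Cor:NotNEH} together with a single application of the Leibniz rule, handling the two shear-free families uniformly. By the Raychaudhuri constraint~\eqref{Raychaudhuri} a shear-free NHS in an Einstein spacetime satisfies either $\rho=\sigma=0$ or $\sigma=0,\ \rho\neq0$, and in each case the preceding corollary supplies the torsion-free rigging connection $\bgam'=\bgam+\Delta\bgam_T$, with $\bgam$ from Proposition~\ref{Prop:NEHCarroll} (resp.~\ref{Prop:NotNEHCarroll}) and $\Delta\bgam_T$ from~\eqref{Shift:Prop1} (resp.~\eqref{Shift:Prop2}). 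The first observation I would make is that, for this connection, $\nabla_{\bk}\bk=0$: feeding $\bk$ into the differentiation slot of the displayed expressions for $\nabla\bk$ in the proofs of Corollaries~\ref{Cor:NEH} and~\ref{Cor:NotNEH}, every term carries a factor $\bcm(\bk)$ or $\bar{\bcm}(\bk)$, both zero, while in the $\rho\neq0$ case the $-\rho\,{\bf q}$ contribution is cancelled on $\bk$ by the $+\rho\,\bk\otimes\bck$ contribution. In particular the inaffinity $\bck_b\bk^a\nabla_a\bk^b$ vanishes, as already recorded.

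Next I would check that the prescribed rescaling $\bk'=A^{-1}\bk$, $\bck'=A\bck$ (which presupposes $A$ nowhere zero) yields admissible Carrollian data for the \emph{same} connection $\bgam'$. Since ${\bf h}$ is untouched and $\bk'$ is a nowhere-zero multiple of $\bk$, it still spans $\ker{\bf h}$; since $\bk'\otimes\bck'=\bk\otimes\bck$, the identity operator ${\bf q}=\bm\otimes\bcm+\bar{\bm}\otimes\bar{\bcm}+\bk\otimes\bck$ is unchanged, so $\nabla{\bf q}=0$ continues to hold and $(\mathcal{N},{\bf h},\bk')$ is again a Carrollian manifold admitting a rigging connection, now normalized so that $\bck'(\bk')=1$.

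The computation itself is a one-liner:
\beq
\nabla_{\bk'}\bk' = A^{-1}\,\nabla_{\bk}\!\big(A^{-1}\bk\big) = A^{-1}\big[\,\bk(A^{-1})\,\bk + A^{-1}\,\nabla_{\bk}\bk\,\big] = -A^{-2}\,\bk(A)\,\bk' ,
\eeq
using $\nabla_{\bk}\bk=0$ and $\bk(A^{-1})=-A^{-2}\bk(A)$. Contracting with $\bck'$ and using $\bck'(\bk')=1$, the inaffinity of $\bk'$ for the connection $\bgam'$ equals $\bck'_b\,\bk'^a\nabla_a\bk'^b = -A^{-2}\,\bk(A)$, which is nonzero precisely where $\bk(A)\neq0$. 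This is the whole argument, and it is just the familiar transformation law of surface gravity under a reparametrization of the null generator.

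I do not expect a genuine obstacle. The two points that require care are (i) confirming that $\nabla_{\bk}\bk$ itself --- not merely its contraction with $\bck$ --- vanishes for the connections of Corollaries~\ref{Cor:NEH} and~\ref{Cor:NotNEH}, so that the $A^{-1}\nabla_{\bk}\bk$ term is genuinely absent from the Leibniz expansion; and (ii) verifying that the rescaling preserves ${\bf q}$ and hence $\nabla{\bf q}=0$, so that the output really is a Carrollian manifold admitting a rigging connection rather than a Carrollian structure equipped with some unrelated affine connection.
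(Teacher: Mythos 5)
Your proposal is correct and follows essentially the same route as the paper, which disposes of this corollary in one line by invoking the Leibniz rule for the rescaling $\bk'=A^{-1}\bk$, $\bck'=A\bck$ on top of Corollaries~\ref{Cor:NEH} and~\ref{Cor:NotNEH}; your computation $\nabla_{\bk'}\bk'=-A^{-2}\bk(A)\,\bk'$ is exactly the intended argument, and your extra checks (that $\nabla_{\bk}\bk$ itself vanishes and that ${\bf q}=\bk'\otimes\bck'+\cdots$ is unchanged) are correct refinements of what the paper leaves implicit.
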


\begin{remark}
    When trying to do particle physics on a Carrollian manifold, often a rigged connection is used. However, in some special cases, the rigging vector that determines the connection is not canonically determined~\cite{Ashtekar2002}---in particular, for extremal horizons, this is the case. However, our construction, at least when $\sigma = 0$, still works for extremal horizons, and thus allows for the construction of a geometrically-determined connection. Then, if one were to use a rigged connection to construct an action on a Carrollian manifold, there exists an additional geometric tensor, namely $\Delta \bgam$ (describing the difference between the rigged connection and the unique connection constructed here), that can be used to build more complicated (and possibly more interesting from a dynamical perspective) actions.
\end{remark}

\section{Conclusions
}

By explicitly solving the exterior differential system that arises from the Cartan structure equations projected onto a null hypersurface, we have determined all possible shear-free null hypersurfaces in Einstein spacetimes as null hypersurface structures (NHS) equipped with a unique pairing of coframe basis and connection. The structure group for NHSs contains the Carrollian group as a subgroup and can be recovered by fixing one of the group parameters. Of particular note, all non-expanding horizons (NEHs) and hence weakly isolated horizons (WIH) \cite{Ashtekar2002} in Einstein spacetimes are determined explicitly in this paper as NHSs in Proposition \ref{Prop:NEHCarroll}. Less is known about the null hypersurfaces described in Proposition \ref{Prop:NotNEHCarroll}, however, examples of these surfaces arise as shear-free conformal Killing horizons \cite{sultana2004conformal}.

Motivated by this observation, we then proved that for such hypersurfaces there is a uniquely determined Carrollian manifold with a unique pair of Ehresmann connection and affine connection. As such, this resolves a significant roadblock in discussing Carrollian geometry: how 
 one geometrically fixes the Ehresmann connection. Indeed, previous work by Ashtekar et. al.~\cite{Ashtekar2002} showed that an Ehresmann connection can be canonically chosen for non-extremal WIHs. However, the method provided here provides a canonical choice regardless of whether the WIH is extremal. 

As our mapping from a NHS to a Carrollian structure fixes the $A$ parameter in the NHS structure group to unity, one must no longer make an arbitrary choice of representative in the generating null direction for the fundamental vector field. Instead, our construction chooses the unique representative with vanishing inaffinity. We note that the process leading to corollary \ref{Cor:NEH} automatically produces NEHs that are automatically extremally weakly-isolated horizons. This is always possible to do for NEHs \cite[Section III A]{Ashtekar2002}. Then, the Carrollian geometries for non-extremal horizons can be constructed using corollary \ref{cor:AchangeL} and the appropriate choice of parameter $A$ to ensure the inaffinity, acting as surface gravity, is non-zero.  In this setting, the geometry constructed here is equivalent to that of~\cite{freidel2024}.

Near-horizon geometries (NHGs) allow for the study of black holes admitting degenerate Killing horizons by generating an associated but distinct spacetime. However, this approach is ill-suited for the study of physics directly on the horizon. NHSs and the resulting Carrollian geometries provide a complementary tool to examine physics on these horizons such as Carrollian hydrodynamics and Carrollian field theories. Interestingly, the degenerate metric, ${\bf h} = 2\bcm \bar{\bcm} $ is independent of the coordinate $\rho$ and hence naturally projects to the metric on the spatial part of the horizon. Using the soliton-formalism introduced in \cite{nurowski2016generalized}, it is possible to determine all possible NHGs in vacuum and Einstein spacetimes \cite{kunduri2013classification}. In particular, this approach could give an alternative approach to classifying all associated NHGs to extremal horizons in vacuum, with a possible cosmological constant \cite{dunajski2023intrinsic}. 

The two propositions give a wealth of well motivated Carrollian geometries that can be used to study the physics of a wider range of null hypersurfaces than just degenerate Killing horizons of stationary black holes. As an immediate application, since the frame and connection are chosen uniquely, we can compute the Brown-York tensor for the corresponding torsion-free connection on null hypersurfaces \cite{Chandrasekaran2022}. This tensor provides a way for defining quasi-local gravitational charges in sub-regions bounded by null hypersurfaces. In the case of asymptotically flat spacetimes, the Brown-York charges can be computed and compared with the charges obtained for symmetries corresponding to the Bondi-Mezner-Sachs algebra, which acts as the asymptotic symmetry group at null infinity \cite{Flanagan:2015pxa, Grant:2021sxk}. This would provide further geometric information to determine explicit Carrollian geometries that describe asymptotic null infinity in vacuum spacetimes. Knowledge of these Carrollian   geometries and the potential reconstruction of the ambient vacuum spacetimes \cite{Ciambelli2018} would give further insight into holography.

While the construction provided in this manuscript has focused on Einstein spacetimes, the results can potentially be generalized to study null hypersurfaces in far more general settings. We note that if instead one considers a spacetime with a cosmological constant which also admits a null electromagnetic field whose null direction is proportional to $\bck$ the resulting field equations in \eqref{NullCartan1} and \eqref{NullCartan2Einstein} would still be applicable. This implies that the shear-free NHSs in this paper would also describe NHSs in these slightly more general spacetimes. As a more ambitious step, one may consider NHSs in asymptotically flat spacetimes with non-null EM fields as a source, such as the outer event horizon in Kerr-Newman spacetimes. Given that \cite{gray2023carrollian} employs a connection that lacks extrinsic data, more could be said about such null hypersurfaces if the appropriate frame and connection were determined.


 \begin{acknowledgments}
The authors would like to thank David Robinson who made considerable stylistic improvements to the initial draft of this paper during their collaboration with PN in the late nineties. SB is supported by the Operational Programme Research Development and Education Project No. CZ.02.01.01/00/ 22-010/0007541. DM and PN were supported by the Norwegian Financial Mechanism 2014-2021 (project registration number 2019/34/H/ST1/00636).
 \end{acknowledgments}

\bibliographystyle{unsrt}
\bibliography{NHCSbib}


\end{document}